\newtheorem{theorem}{Result}
\newtheorem{lemma}{Lemma}
\newtheorem{task}{Task}
\newtheorem{corollary}{Corollary}
\newcommand{\choi}[1]{\mathcal{J}^{\mathcal{#1}}}
\newcommand{\maxKB}[0]{\ketbra{\Phi^{+}}}
\newcommand{\suppVector}[1]{\abs{ \textrm{supp}(#1) } } 
\newcommand{\choiRank}[0]{r_{c}^{\mathcal{N}}}
\newcommand{\depola}[1]{\mathcal{D}^{\rm{pol}}_{#1}} 
\newcommand{\dephase}[1]{\mathcal{D}^{\rm{ph}}_{#1}} 
\newcommand{\ben}[1]{{\color{black}#1}} 
\newcommand{\CY}[1]{{\color{black}#1}}
\newcommand{\CYtwo}[1]{{\color{black}#1}}
\newcommand{\bentwo}[1]{{\color{black}#1}}
\begin{document}

\preprint{APS/123-QED}

\title{Operational Interpretation of the Choi Rank Through Exclusion Tasks}

\author{Benjamin Stratton}
\email{ben.stratton@bristol.ac.uk}
\affiliation{Quantum Engineering Centre for Doctoral Training, H. H. Wills Physics Laboratory and Department of Electrical \& Electronic Engineering, University of Bristol, BS8 1FD, UK}
\affiliation{H.H. Wills Physics Laboratory, University of Bristol, Tyndall Avenue, Bristol, BS8 1TL, UK}

\author{Chung-Yun Hsieh}
\affiliation{H.H. Wills Physics Laboratory, University of Bristol,
Tyndall Avenue, Bristol, BS8 1TL, UK}

\author{Paul Skrzypczyk}
\affiliation{H.H. Wills Physics Laboratory, University of Bristol,
Tyndall Avenue, Bristol, BS8 1TL, UK}
\affiliation{CIFAR Azrieli Global Scholars Program, CIFAR, Toronto Canada}

\date{\today}

\begin{abstract}
The Choi-state is an indispensable tool in the study and analysis of quantum channels. Considering a channel in terms of its associated Choi-state can greatly simplify problems. 
It also offers an alternative approach to the characterisation of a channel, with properties of the Choi-state providing novel insight into a channel's behaviour. The rank of a Choi-state, termed the {\em Choi-rank}, has proven to be an important characterising property, and here, its significance is further elucidated through an operational interpretation. The Choi-rank is shown to provide a universal bound on how successfully two agents, Alice and Bob, can perform an entanglement-assisted exclusion task. The task can be considered an extension of super-dense coding, where Bob can only output information about Alice's encoded bit-string with certainty. Conclusive state exclusion, in place of state discrimination, is therefore considered at the culmination of the super-dense coding protocol. In order to prove this result, a necessary condition for conclusive $k$-state exclusion of a set of states is presented in order to achieve this result, and the notions of weak and strong exclusion are introduced.
\end{abstract}

\maketitle

{\bf\em Introduction.---}
States give us only a snapshot in time. To model how systems evolve, interact with other systems, and respond to external stimuli, it is essential to understand dynamics. Such understanding then enables the prediction of a system's future state, facilitates the design and implementation of controls for on-demand manipulation, and allows for the characterisation of a system's response to external influences, such as noise.

In closed quantum systems, dynamics is modelled by unitary operators; the evolution of a state $\rho$ to $\rho'$ is given by $\rho' = U \rho U^{\dagger}$ for some unitary $U$. A more general notion of quantum dynamics is captured by {\em quantum channels}, or simply {\em channels}, which are {\em completely-positive trace-preserving} (CPTP) linear maps~\cite{nielsen_chuang_2010,wolf2012quantum}. Operationally, channels can be thought of as modelling the dynamics of open quantum systems, where a system evolves whilst interacting with an environment. Any channel $\mathcal{N}$ acting on a system ${\rm S}$ has a {\em Stinespring dilation}~\cite{Stinespringdilation} given by
$
\mathcal{N}(\rho) = \textrm{tr}_{\textrm{E}} \big[ U_{\textrm{SE}} ( \rho_{\textrm{S}} \otimes \tau_{\textrm{E}}) U_{\textrm{SE}}^{\dagger} \big], 
$
where $\textrm{S},\textrm{E}$ represent the system and environment respectively, $\tau_{\textrm{E}}$ is some environment state, and $U_{\rm SE}$ is a unitary acting on ${\rm SE}$. All quantum dynamics can therefore be modelled as unitary with respect to some higher dimensional space, and can be described by a single state ($\tau_{\rm E}$) and unitary ($U_{\rm SE}$). Whilst the Stinespring dilation gives a physically motivated description of quantum dynamics, a more mathematically motivated description is given through a {\em Kraus decomposition}~\cite{nielsen_chuang_2010}\footnote{ \ben{
Note, other methods of modelling the dynamics of open quantum systems beyond quantum channels does exist, one could also consider microscopically derived master equations or the Von Neumann equation, for example.} 
}. For a channel $\mathcal{N}$ there always exists a set of $M$ operators $\{K_{x}\}^{M}_{x=1}$, where $\sum_{x=1}^{M} K_{x}^{\dagger}K_{x} = \mathbb{I}$, such that the action of $\mathcal{N}$ on a state $\rho$ is given by
$
\mathcal{N}(\rho) = \sum_{x=1}^{M} K_{x}\rho K_{x}^{\dagger}. 
$
The Kraus decomposition is useful when applying \CY{a channel} since one only needs to consider the input state and not the environment.

\ben{A given channel can have infinitely many Stinespring dilations and Kraus representations -- all sets of operators that are unitarily equivalent to $\{K_{x}\}^{M}_{x=1}$ define the same channel, for example.
However, and perhaps surprisingly,} a channel can be {\em uniquely} described through its action on a single quantum state. The {\em Choi-Jamio{\l}kowski isomorphism}~\cite{Jamiolkowski1972,ChoiOriginal} is a linear mapping between quantum channels and bipartite quantum states. For a channel $\mathcal{N}_{\rm A}$ acting on a system ${\rm A}$ with dimension $d$, its {\em Choi-state} $\choi{N}_{\rm AB}$ is a bipartite state in ${\rm AB}$ (where ${\rm B}$ also has dimension $d$) defined by
\begin{equation}
    \choi{N}_{\rm AB} = (\mathcal{N}_{\rm A} \otimes \mathcal{I}_{\rm B}) \big( \ketbra{\Phi^+}_{\rm AB} \big), 
\end{equation}
where $\ket{\Phi^+}_{\rm AB} \coloneqq\sum_{i=0}^{d-1}\ket{ii}_{\rm AB}/\sqrt{d}$ is a maximally entangled state in ${\rm AB}$~\footnote{In general $\ket{\Phi}_{\rm AB}$ can be any full-Schmidt-rank pure state.} ($\ket{i}_{\rm A}$ and $\ket{i}_{\rm B}$ are elements of fixed orthonormal bases of ${\rm A},{\rm B}$, respectively), and $\mathcal{I}_{\rm B}$ is an identity channel acting on ${\rm B}$. Subscripts will be used to explicitly denote the corresponding (sub-)systems if needed. From the Choi-state, the action of $\mathcal{N}_{\rm A}$ on a state $\rho$ can be recovered as 
$
\mathcal{N}_{\rm A}(\rho) = d ~ \textrm{tr}_{\rm B}\left[ \left(\mathbb{I}_{\rm A} \otimes \rho^{t}\right) \choi{N}_{\rm AB}\right],
$
where $(\cdot)^{t}$ is the transpose operation in the given fixed basis. Hence, remarkably, one can fully characterise the action of $\mathcal{N}$ on arbitrary states by a single action on $\ket{\Phi^+}_{\rm AB}$.

Choi-states have proved to be one of the most powerful tools for understanding and characterising channels, both analytically and numerically. For instance, a linear map $\mathcal{N}_{\rm A}$ acting on ${\rm A}$ is a channel if and only if its Choi-state is positive semi-definite, $\choi{N}_{\rm AB} \geq 0$, and maximally mixed in the B subspace, $\textrm{tr}_{\rm A}[\choi{N}_{\rm AB}] = \mathbb{I}_{\rm B}/d$~\cite{Jamiolkowski1972,ChoiOriginal}. By applying further operations to \CY{Choi-states, channels can be categorised into relevant subsets---a task often challenging when relying on Stinespring dilation or Kraus decomposition~\cite{rains2001semidefinite, Wang_2019, PhysRevResearch.2.023298, ji2021convertibility}. For example, a channel is entanglement-breaking~\cite{Horodecki2003RevMP} if its Choi-state is separable; it is steering-breaking if its Choi-state is steerable~\cite{Ku2022PRXQ}. Also, non-signalling conditions can be described by simple formulae of Choi-states~\cite{PianiPRA2006,Hsieh2022PRR,WangITIT2018,Hoban2018NJP,BeckmanPRA2001,Eggeling2002EPL,Duan2016ITIT}. Generally, Choi-states allow problems concerning quantum dynamics to be reformulated as numerically feasible convex optimisations~\cite{PaulSDPBook,TakagiPRX2019}, making them efficiently solvable. To date, Choi-states are widely recognised as an essential tool in characterising \ben{and} quantifying dynamical quantum signatures---they provide an alternative, often simpler, approach to dealing with channels~\cite{stratton2024, zanoni2024choidefined, gour2020dynamical, Gour_2021, PhysRevA.98.042338, PhysRevX.8.021033, doi:10.1098/rspa.2019.0251, Wang_2019, Berk_2021, PhysRevResearch.2.023298, bäuml2019resource, PhysRevLett.125.180505, ji2021convertibility,Haapasalo2021Quantum, Hsieh2024Quantum, liu2019resource, PhysRevResearch.2.012035, PhysRevLett.122.190405, PhysRevLett.115.010405, Hsieh_2020,Horodecki2003RevMP,Ku2022PRXQ,PaulSDPBook,Hsieh2022PRR,Hsieh2024,TakagiPRX2019,BuscemiITIT2010,WangITIT2018,Hoban2018NJP,BeckmanPRA2001,Eggeling2002EPL,Duan2016ITIT,PianiPRA2006}.} 

An important property of the Choi-state is its rank, $\choiRank$, termed {\em Choi-rank}. This is a key characterising property that provides insight into the structure of quantum dynamics; it has been shown to place mathematical bounds on the channel's description. For example, it is one, $\choiRank=1$, if and only if the channel is unitary \cite{Girard_2022}; it serves as a lower bound on the number of operators needed in the Kraus decomposition of the channel, $\choiRank \leq M$~\cite{nielsen_chuang_2010}; it equals the minimum dimension of the environment of a channel's Stinespring dilation~\cite{Singh2022}; and, when considering mixed unitary channels, is used to bound the number of unitaries needed to define the channel~\cite{Girard_2022}. 

To date, the Choi-rank has been used solely as a mathematical tool, lacking a clear operational interpretation. \ben{In this work, we provide the Choi-rank with an operational interpretation, further cementing its importance as a characterising property.} To this end, we introduce the task of \CY{{\em entanglement-assisted sub-channel exclusion}, \ben{phrasing them as} communication tasks that resemble} super-dense coding \cite{PhysRevLett.69.2881}, \ben{but} where state exclusion \cite{PhysRevA.66.062111, Bandyopadhyay_2014, Pusey2012} \CY{replaces} state discrimination \cite{Bae_2015}. 
\CY{We show that \ben{the} Choi-rank establishes a tight\ben{,} fundamental upper bound \ben{on one's ability to succeed in a task of this nature}. \bentwo{Hence, this bound can be used to provide the Choi-rank with a physical and operational meaning.}}

Informally, in the task of state exclusion, a referee gives a player a state from a set of $N$ possible predetermined states. The player then performs a measurement on the state and aims to exclude a set of $k$ states that were \textit{not} given to them. This leaves the player with $N-k$ possible states that they were sent. In comparison, if performing the task of state discrimination, after the measurement the player would aim to declare a single state that they \textit{were} sent. For some sets of states, the player can perform {\em conclusive} state exclusion --- excluding $k$ states with unit probability --- even when they can say nothing deterministically about what state they \textit{do have}. Exclusionary information~\cite{PhysRevLett.125.110401} --- knowledge about what state the player does not have --- can, therefore, be the only certain knowledge about the state that it is possible for the player to obtain. Hence, when using states to encode messages, one may be able to deterministically say what message was not encoded, whilst only being able to probabilistically say what message was encoded. The importance of exclusionary information has already been demonstrated in the foundations of quantum theory \cite{PhysRevA.66.062111, Pusey2012, hsieh2023quantum} and in the quantification of quantum resources~\cite{uola2020all, PhysRevLett.125.110401, Ye2021}. Here, we build on its ongoing significance in quantum information theory~\cite{Bandyopadhyay_2014, Heinosaari_2018, mishra2023optimal, johnston2023tight, Knee_2017, PhysRevA.107.L030202, PhysRevResearch.2.013326, Molina_2019} by establishing a connection between exclusionary information and the Choi-state.

{\bf\em State exclusion tasks.---}
Formally, in a state exclusion task~\cite{PhysRevA.66.062111, Pusey2012, Bandyopadhyay_2014}, a referee has a set of states $\{\rho_{x} \}^{N}_{x=1}$ and sends one state from the set, with probability $p_{x}$, to a player. The player performs a general $N$-outcome measurement described by a \textit{positive operator-valued measure} (POVM)~\cite{nielsen_chuang_2010} $\{T_{a}\}_{a=1}^{N}$, where $T_a\ge0$ $\forall\,a$ and $\sum_{a=1}^NT_a=\mathbb{I}$, on the state and outputs a label $g \in \{1, \ldots , N\}$. They win if $g \neq x$ and fail if $g=x$. Namely, the player wins if they successfully exclude the state by outputting a label that was not associated to the sent state; they fail if they output the label associated to the sent state. 

If the player outputs a single label $g$ such that $g \neq x$ with certainty, this is {\em conclusive $1$-state exclusion}. This occurs if the player is able to find a POVM such that  
\begin{equation}
    \textrm{tr} \big[ T_{x}\rho_{x}] = 0 ~~ \forall ~~ x  \in \{1, \ldots ,N\}. \label{stateExclusionEquation}
\end{equation}
If the player gets the measurement outcome associated to $T_{g}$, they output $g$ knowing with certainty the referee could not have sent $\rho_{g}$. 

If the player outputs a set of $k$ labels, $\{ g_{i} \}^{k}_{i=1}$, such that $x \notin \{ g_{i} \}^{k}_{i=1}$ with certainty, this is conclusive {\em $k$-state exclusion}. There are $N \choose k$ different sets of $k$ labels the player could exclude, corresponding to all the different subsets of $ \{1,\ldots,N \} $ of length $k$. Therefore, when performing $k$-state exclusion, the player aims to find a POVM with ${N \choose k}$ elements such that each measurement outcome allows the player to exclude a subset of states from $ \{ \rho_{x} \}^{N}_{x=1}$ of length $k$.

\begin{figure}
    \centering
     \hspace*{-1cm} \includegraphics[scale=0.38]{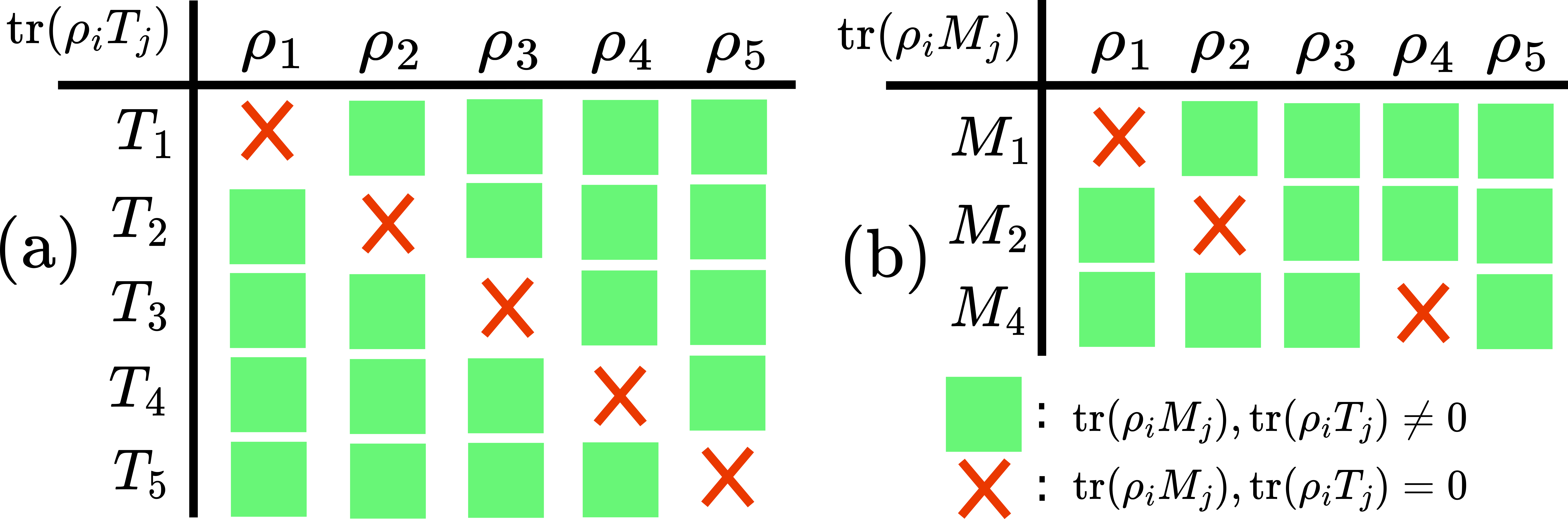}
    \caption{A graphical depiction of weak versus strong state exclusion. (a) Strong state exclusion: each outcome of the POVM $\{T_{a}\}^{5}_{a=1}$ excludes a different state from the set $\{ \rho_{x}\}^{5}_{x=1}$. (b) Weak state exclusion: each outcome of the POVM $\{M_{a}\}_{a\in\{1,2,4\}}$ excludes one state from the set $\{\rho_{x}\}^{5}_{x=1}$. However, only some states in $\{ \rho_{x}\}^{5}_{x=1}$ are ever excluded. There exists no POVM element in $\{M_{a}\}_{a\in\{1,2,4\}}$ to exclude $\rho_3$ or $\rho_5$.} 
    \label{weak_strong_exlcusion_figure}
\end{figure}

Whilst the notion of state exclusion is widely understood, some of the nuances in the definition are not agreed upon. In addition to Eq.~(\ref{stateExclusionEquation}) being a condition for conclusive state exclusion, the following additional condition, 
\begin{equation}
    \sum_{x=1}^{N} \textrm{tr} \big[ T_{a} \rho_{x} ] \neq 0~\forall~a\in\{1,...,N\},
    \label{addationalStateExclusionPOVM}
 \end{equation}
has also been implicitly or explicitly enforced on occasion \cite{Heinosaari_2018, Bandyopadhyay_2014}, while on other occasions it has not \cite{mishra2023optimal}. This additional condition ensures that all outcomes of the POVM $\{T_{a}\}_{a=1}^{N}$ have some probability of occurring. By enforcing both Eq.~(\ref{stateExclusionEquation}) and Eq.~(\ref{addationalStateExclusionPOVM}), conclusive $1$-state exclusion on the set $\{ \rho_{x} \}^{N}_{x=1}$ is defined to be the existence of an $N$ element POVM where each element excludes a different state from $\{ \rho_{x} \}^{N}_{x=1}$ with certainty, as seen in Fig.~\ref{weak_strong_exlcusion_figure} (a). We define this to be {\em strong state exclusion}.

On the other hand, by only enforcing Eq.~(\ref{stateExclusionEquation}), conclusive $1$-state exclusion on $\{ \rho_{x} \}^{N}_{x=1}$ is defined to be the existence of a POVM with $L$ non-zero elements, where $L\leq N$, such that each conclusively exclude a different state from a subset of $\{ \rho_{x} \}^{N}_{x=1}$ of size $L$, as seen in Fig.~\ref{weak_strong_exlcusion_figure} (b). We define this to be {\em weak state exclusion}.

When extended to $k$-state exclusion, strong exclusion means that there exists a POVM that can exclude all possible sub-sets of $\{ \rho_{x} \}^{N}_{x=1}$ of length $k$. Weak exclusion then means that there exists a POVM that can only exclude {\em only some subsets} of $\{ \rho_{x} \}^{N}_{x=1}$ of length $k$. More details on weak and strong state exclusion can be found in Supplementary Material A.

The task of state exclusion is reminiscent of state discrimination, where the player instead tries to output a label $g$ such that $g=x$. It can be seen that conclusive state discrimination, where a player outputs a label $g=x$ with certainty, is a special case of \ben{strong} conclusive $k$-state exclusion where $k=N-1$. Outputting $N-1$ labels of states that were definitely not sent is equal to outputting one label of the state that definitely was sent. It is a well-known result that conclusive state discrimination, and hence conclusive $(N-1)$-state exclusion, is only possible if all states in $\{ \rho_{x} \}^{N}_{x=1}$ are orthogonal~\cite{nielsen_chuang_2010}.  

A closely related task is sub-channel exclusion. Consider a collection of completely-positive {\em trace-non-increasing} linear maps, $\Psi = \{\Psi_{x}\}_{x=1}^N$, such that $\sum_{x=1}^N \Psi_{x}$ is a channel~\cite{wolf2012quantum}. This collection is called an {\em instrument}, and each map $\Psi_{x}$ is called a {\em sub-channel}. In sub-channel exclusion, a player has a reference state $\rho$ that they send to the referee. The referee then measures $\rho$ using the instrument and returns the post-measurement state to the player. The player measures a POVM on the state and outputs a label $g \in \{1, \ldots ,N\}$. They succeed if they output a label of a sub-channel that was not applied. As before, the player can output the label of a sub-channel not applied with certainty, they can output $k$ labels, $\{ g_{i} \}^{k}_{i=1}$, or they can output $k$ labels with certainty.

{\bf\em Necessary condition for $k$-state exclusion.---}
It has previously been shown that all $k$-state exclusion tasks can be recast as $1$-state exclusion tasks by reformulating the set $\{\rho_{x}\}^{N}_{x=1}$ (see Appendix I of Ref.~\cite{Bandyopadhyay_2014}). \bentwo{Conceptually, this means that all $k$-state exclusion tasks are mathematically equivalent to a $1$-state exclusion task --- solving one task means you have solved the other. Consequently, all state exclusion tasks can be studied within the $1$-state exclusion framework through this ``\textit{equivalent}'' task. }This has led to a consensus that only the task of $1$-state exclusion needed to be studied, and hence, all feasibility conditions in the literature for both weak and strong state exclusion tasks have been for conclusive $1$-state exclusion~\cite{PhysRevA.66.062111, Bandyopadhyay_2014, Heinosaari_2018, mishra2023optimal, Knee_2017, johnston2023tight}. However, when using the reformulation method for accessing $k$-state exclusion tasks, the size of the reformulated sets can get very large for particular values of $N$ and $k$, \ben{potentially} making the \bentwo{equivalent task} computationally \ben{infeasible} to access. In addition, scenarios may exist where one wants to consider the original task rather than the \bentwo{equivalent task}; this may happen, for instance, if the set of states upon which exclusion is being performed holds some operational significance. \ben{By reformulating the set into the \bentwo{equivalent task}, the states in the set will differ, potentially significantly, from the physical states considered in the original task.}
Hence, a condition for $k$-state exclusion that is dependent only on the original set $\{\rho_{x} \}^{N}_{x=1}$ is of \CYtwo{value---it} \ben{by-passes the need to consider the \bentwo{equivalent task}}. Here, a necessary condition of this form is presented as our first main result. It allows for a feasibility test of conclusive $k$-state exclusion where the number of conditions to be checked is always linear in $N$.

\begin{lemma} \label{lemma1}
    A referee has a set of $N$, $d$-dimensional quantum states, $\{ \rho_{x}\}_{x=1}^{N}$. A necessary condition for the existence of a POVM such that the player can perform conclusive strong or weak $k$-state exclusion is  
    \begin{equation}\label{Eq:Lemma1}
        \sum_{x=1}^{N} \Pi_{x} \leq (N-k) \mathbb{I},
    \end{equation}
    where $\Pi_{x}$ is the projector onto the support of $\rho_{x}$ for all $x$.
\end{lemma}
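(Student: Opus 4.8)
The plan is to argue directly at the level of the exclusion POVM and to reduce Eq.~(\ref{Eq:Lemma1}) to the scalar statement that $\sum_{x=1}^{N}\|\Pi_x\ket{\psi}\|^{2}\le N-k$ for every unit vector $\ket{\psi}$, which is equivalent to $\sum_x\Pi_x\le(N-k)\mathbb{I}$ since each $\Pi_x$ is a projector. It suffices to treat weak $k$-state exclusion, since a strong-exclusion POVM is \emph{a fortiori} a weak one; in either case conclusive $k$-state exclusion supplies a POVM $\{M_a\}_a$ together with, for each non-zero outcome $a$, a label set $S_a\subseteq\{1,\dots,N\}$ with $|S_a|\ge k$ such that outcome $a$ conclusively excludes every state in $S_a$, i.e.\ $\textrm{tr}[M_a\rho_x]=0$ for all $x\in S_a$ (generalising Eq.~(\ref{stateExclusionEquation})).

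First I would record the elementary fact that $\textrm{tr}[M_a\rho_x]=0$ with $M_a,\rho_x\ge0$ forces $\sqrt{M_a}\sqrt{\rho_x}=0$, hence $M_a\rho_x=0$, and therefore $M_a\Pi_x=\Pi_x M_a=0$; equivalently, each such $M_a$ is supported on $\ker\Pi_x$. The key step is then the operator inequality
\begin{equation}
\sum_{a:\,x\in S_a}M_a \;\le\; \mathbb{I}-\Pi_x, \qquad\textrm{for all }x,
\end{equation}
which holds because $T_x:=\sum_{a:\,x\in S_a}M_a$ is a sub-sum of a POVM (so $0\le T_x\le\mathbb{I}$) that is moreover supported on $\ker\Pi_x$; conjugating the bound $T_x\le\mathbb{I}$ by the projector $\mathbb{I}-\Pi_x$, which fixes $T_x$, gives $T_x\le\mathbb{I}-\Pi_x$. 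Taking expectation values in a unit vector $\ket{\psi}$ and writing $p_a=\bra{\psi}M_a\ket{\psi}\ge0$ then yields $\sum_{a:\,x\in S_a}p_a\le 1-\|\Pi_x\ket{\psi}\|^{2}$.

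The final step is a double-counting argument: summing this over $x$ and exchanging the order of summation,
\begin{equation}
\sum_a p_a\,|S_a| \;=\; \sum_{x=1}^{N}\sum_{a:\,x\in S_a}p_a \;\le\; \sum_{x=1}^{N}\big(1-\|\Pi_x\ket{\psi}\|^{2}\big),
\end{equation}
whose right-hand side equals $N-\sum_{x=1}^{N}\|\Pi_x\ket{\psi}\|^{2}$. Since $|S_a|\ge k$ for every non-zero outcome and $\sum_a p_a=\bra{\psi}\mathbb{I}\ket{\psi}=1$, the left-hand side is at least $k$; rearranging gives $\sum_{x}\|\Pi_x\ket{\psi}\|^{2}\le N-k$, and since $\ket{\psi}$ was an arbitrary unit vector this is precisely Eq.~(\ref{Eq:Lemma1}).

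The main obstacle is finding the right way to combine the two ingredients in play — the support orthogonality $M_a\Pi_x=0$ and the normalisation $\sum_a M_a=\mathbb{I}$. A naive route, such as averaging the weaker bound $\Pi_x\le\mathbb{I}-M_a$ over the $k$-subsets $S_a$ containing a fixed $x$, only delivers a bound in which the deficit $k$ on the right of Eq.~(\ref{Eq:Lemma1}) is replaced by the much smaller $k/{N-1\choose k-1}$, which is worthless for $k\ge2$; the full $N-k$ appears only once the POVM elements are grouped by which state they exclude and the estimate is routed through $\sum_a p_a|S_a|$. As a consistency check I would also verify the extreme cases: $k=1$ recovers the known necessary condition for weak $1$-state exclusion, while $k=N-1$ forces $\sum_x\Pi_x\le\mathbb{I}$, i.e.\ mutual orthogonality of the $\rho_x$, matching the textbook condition for conclusive state discrimination.
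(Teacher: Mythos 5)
Your proof is correct and follows essentially the same route as the paper's: the key inequality $\sum_{a:\,x\in S_a}M_a\le\mathbb{I}-\Pi_x$ is exactly the paper's Lemma~2 applied to the grouped POVM elements, and your double-counting step is the scalar (expectation-value) form of the paper's operator identity $\sum_{x}\sum_{a:\,x\in S_a}M_a=k\,\mathbb{I}$. The only difference is cosmetic: you allow general outcome sets with $|S_a|\ge k$ rather than indexing the POVM by all $\binom{N}{k}$ subsets, which handles the weak case slightly more directly but adds nothing essential.
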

See Appendix I for the proof. Note, given that every sub-channel exclusion task induces an effective state exclusion task, Lemma~\ref{lemma1} can be applied to both state and sub-channel exclusion tasks. The proof of Lemma~\ref{lemma1} does not enforce Eq.~\eqref{addationalStateExclusionPOVM} as a condition, and hence is a provably necessary condition for strong state exclusion~\footnote{Consider the set of states with projectors onto their supports of $\{ \ketbra{00}, \ketbra{01}, \ketbra{00} + \ketbra{10}, \ketbra{01} + \ketbra{11}\}$. These satisfy Lemma ~\ref{lemma1} for $k=2$, but by considering the \bentwo{equivalent task} tasks it can be seen that strong state exclusion is never possible as there is a full-rank state in the reformulated set.}. It is left open as to whether Lemma~\ref{lemma1} is a sufficient condition for weak $k$-state exclusion. However, if the inequality is saturated, then Lemma~\ref{lemma1} is sufficient for both weak and strong $1$-state exclusion. In this case, measuring the POVM $\{(\mathbb{I}-\Pi_{x})/k \}^{N}_{x=1}$ would perform $1$-state exclusion.

As an application of Lemma~\ref{lemma1}, consider $\{ \rho_{x} \}^{N}_{x=1}$ to be a set of $N$ orthogonal states. It follows that $\sum^{N}_{x=1} \Pi_{x} \leq \mathbb{I}$, and, hence, the largest value of $k$ such that Lemma~\ref{lemma1} is satisfied is $k=N-1$. Lemma~\ref{lemma1} therefore implies the ability to perform conclusive state discrimination on a set of $N$ orthogonal states, as expected. This also shows that there exists a set of states for all values of $N$ and $d$ for which \CY{Eq.~\eqref{Eq:Lemma1} is saturated; namely,} Lemma~\ref{lemma1} is tight. In addition, if considering rank $r$ states of dimension $d$, one can always find a weak exclusion task for which 
\CY{Eq.~\eqref{Eq:Lemma1} is saturated}
if $N = { d \choose r }$. Firstly, let $\{ \ket{i} \}^{d-1}_{i=0}$ be a basis in the $d$-dimensional space. 
One can then consider $\{\rho_x\}_{x=1}^N$ such that each $\Pi_x$ is a projector onto the basis elements contained in each subset of$\{ \ket{i} \}^{d-1}_{i=0}$ of length $r$, of which there are $N$ of them. By measuring the POVM $\{ \dyad{i} \}_{i=0}^{d-1}$, conclusive weak $k$-state exclusion can be performed with $k = { d-1 \choose r }$. This is as predicted by Lemma~\ref{lemma1}, as $\sum^{N}_{x=1}\Pi_x = {d-1 \choose r-1} \mathbb{I}$, with ${d \choose r } - {d-1 \choose r } = {d-1 \choose r-1}$. 
Finally, Lemma~\ref{lemma1} also leads to the following corollary on the maximum value of $k$.  
\ben{
\begin{corollary} \label{lowerBoundOnKcorollary}
    When performing conclusive state exclusion on $\{ \rho_{x}\}^{N}_{x=1}$, \bentwo{an upper bound on the number, $k$, of states that can be excluded} is given by 
    \begin{align}
        k \leq N -  \left(2^{D_{\rm max}(\omega \| \mathbb{I}/d)}\alpha\right)/d     \leq  N(d-1)/d ,
    \end{align}
    where $\alpha \coloneqq {\rm tr} \big[ \sum_{x=1}^{N} \Pi_{x} \big]$, $\omega \coloneqq \sum_{x=1}^{N} \Pi_{x}/\alpha$, and $ D_{\textrm{max}} (\psi \| \sigma) \coloneqq \log_2 \min \{\lambda\geq1 : \psi \leq \lambda \sigma \}$ is the max relative entropy~\cite{dattaMaxRealtiveEntropy}.
\end{corollary}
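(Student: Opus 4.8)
The plan is to read the bound straight off Lemma~\ref{lemma1} by recasting the operator inequality $\sum_{x=1}^{N}\Pi_{x}\le(N-k)\mathbb{I}$ into the form controlled by the max relative entropy. First I would normalise: dividing by $\alpha={\rm tr}\big[\sum_{x}\Pi_{x}\big]$ and inserting $\mathbb{I}/d$, Lemma~\ref{lemma1} reads $\omega\le\lambda_{0}\,\mathbb{I}/d$ with $\lambda_{0}\coloneqq(N-k)d/\alpha$. Taking the trace of the \emph{same} inequality gives $\alpha\le(N-k)d$, i.e.\ $\lambda_{0}\ge1$, so $\lambda_{0}$ lies in the feasible set of the minimisation defining $D_{\max}(\omega\|\mathbb{I}/d)=\log_{2}\min\{\lambda\ge1:\omega\le\lambda\,\mathbb{I}/d\}$. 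Hence $2^{D_{\max}(\omega\|\mathbb{I}/d)}\le\lambda_{0}=(N-k)d/\alpha$, which rearranges to $k\le N-2^{D_{\max}(\omega\|\mathbb{I}/d)}\alpha/d$; since $k\in\mathbb{Z}$ I may apply $\lfloor\cdot\rfloor$ to the right-hand side, giving the first inequality.

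For the second inequality I would use that, because $\omega$ is a density operator on a $d$-dimensional space, $\|\omega\|_{\infty}\ge1/d$, so the cutoff $\lambda\ge1$ is inactive and $2^{D_{\max}(\omega\|\mathbb{I}/d)}=d\|\omega\|_{\infty}$, whence $2^{D_{\max}(\omega\|\mathbb{I}/d)}\alpha=d\,\|\sum_{x}\Pi_{x}\|_{\infty}$. Bounding the largest eigenvalue below by the average eigenvalue, $\|\sum_{x}\Pi_{x}\|_{\infty}\ge\alpha/d$, and using $\alpha=\sum_{x}{\rm tr}[\Pi_{x}]\ge N$ (each $\rho_{x}\neq0$, so ${\rm rank}\,\Pi_{x}\ge1$), I obtain $2^{D_{\max}(\omega\|\mathbb{I}/d)}\alpha\ge\alpha\ge N$, hence $N-2^{D_{\max}(\omega\|\mathbb{I}/d)}\alpha/d\le N-N/d=N(d-1)/d$; applying $\lfloor\cdot\rfloor$ to both sides closes the chain.

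The computation is light, and the one point that needs genuine care is the lower cutoff $\lambda\ge1$ in the definition of $D_{\max}$: I must verify both that the value of $\lambda$ supplied by Lemma~\ref{lemma1} already satisfies $\lambda\ge1$ (which is exactly the trace of Lemma~\ref{lemma1}) and that $2^{D_{\max}(\omega\|\mathbb{I}/d)}$ equals $d\|\omega\|_{\infty}$ rather than $\max\{1,d\|\omega\|_{\infty}\}$ (which holds because $\omega$ is normalised). Everything else is just monotonicity of a minimum and the eigenvalue-averaging inequality, so I do not anticipate a substantive obstacle.
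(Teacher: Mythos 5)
Your proposal is correct and takes essentially the same route as the paper: the first inequality is read off Lemma~\ref{lemma1} by recognising $(N-k)d/\alpha$ as a feasible $\lambda$ in the definition of $D_{\max}(\omega\|\mathbb{I}/d)$ (with the trace of Lemma~\ref{lemma1} guaranteeing $\lambda\ge 1$), and the second follows from $\alpha\ge N$ together with $2^{D_{\max}}\ge 1$. Your handling of the $\lambda\ge1$ cutoff and the chaining of the two bounds is, if anything, slightly more explicit than the paper's.
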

}
See Appendix II for the proof. Corollary~\ref{lowerBoundOnKcorollary} sets a fundamental limit on the number of states that can be excluded, and, interestingly, gives the max-relative entropy a novel operational meaning in terms of state exclusion tasks. \ben{Specifically, given a set of states, the max-relative entropy establishes an upper-bound on the number of states that can be conclusively excluded.} Below, as another main result, the Choi-rank is given a novel operational interpretation --- it sets a universal upper bound on the number of states that be excluded in a communication task.

{\bf\em Operational interpretation of Choi-rank.---}
An operational interpretation of the Choi-rank of a channel $\mathcal{N}$ is now presented through an {\em entanglement-assisted sub-channel exclusion task}. The task is defined as a communication task between two spatially separated parties, Alice (A) and Bob (B). Alice aims to use a pre-shared entangled state to increase the amount of classical information she can send to Bob through a single use of channel $\mathcal{N}$, as in super-dense coding~\cite{Bennett92}. It is assumed that the message Alice is sending is of the utmost importance, meaning Bob chooses to only output information about the encoded message that he is certain of.

\begin{task} \label{task1}
Alice and Bob share a maximally entangled state of local dimension $d$. Alice encodes $x$, one of $N$ bit-strings, that she wants to send to Bob by applying one of the unitary channels from $\{U_{x}\}^{N}_{x=1}$ to her half of the maximally entangled state. She then sends her half of the maximally entangled state to Bob via the channel $\mathcal{N}$. Bob performs a joint measurement and aims to output a set of $k$ bit-strings that he is certain Alice did not encode. 
\end{task}

\ben{
Task~\ref{task1} \CY{aims to} operationally quantify the information about Alice's encoded message \CY{(communicated via $\mathcal{N}$)} that Bob \CY{can} obtain with certainty. \CY{Its} significance can be understood by first considering \CY{the case with $k = N-1$ and $N = d^2$}. If Bob \CY{achieves this, he can} exclude all bit-strings but one, effectively outputting a single bit-string, $l$, such that $l=x$ with certainty. Hence, Bob \CY{knows} with certainty what message Alice encoded, with \CY{Task~\ref{task1}} becoming equivalent to (conclusive) \CY{super-dense coding}. However, as \CY{stated} above, Bob can only achieve $k=N-1$ if \CY{the states} after encoding and sending are \CY{orthogonal}. This \CY{is, in general, not true when $\mathcal{N}$ introduces} noise (see, e.g., Ref.~\cite{HsiehPRXQ2021}). Bob can therefore instead attempt to say something with certainty about Alice's encoded message by outputting a set of $k$ bit-strings, where $k \leq N-1$, which he is certain \CY{\em does not} contain $x$. 
} 

We will focus on Bob's ability to maximise the value of $k$, measuring his success in Task~\ref{task1} by the maximum number of bit-strings that is it possible for him to exclude. The larger the value of $k$, the more Bob knows about which bit-string Alice encoded. This culminates in Bob performing conclusive state discrimination if $k=N-1$ and hence knowing Alice's encoded bit-string with certainty. The following result upper-bounds $k$ via the Choi-rank of $\mathcal{N}$ and holds for all possible unitary-encoding and decoding (POVMs) strategies: 

\begin{theorem} \label{result1}
The maximum number of bit-strings, $k$, that Bob can exclude in Task~\ref{task1} is
    \begin{equation}
        k \leq  N(d^{2}-r^{\mathcal{N}}_{c})/d^{2}, \label{Theorem 1}
    \end{equation}
\end{theorem}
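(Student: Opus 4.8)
The plan is to reduce Task~\ref{task1} to an ordinary $k$-state exclusion task on an explicit set of $N$ states in ${\rm AB}$, and then apply Lemma~\ref{lemma1} together with a computation of the rank of the projectors onto the supports of those states. First I would write down the state that Bob holds after Alice encodes $x$ and sends her half through $\mathcal{N}$. Since Alice applies the unitary channel $U_x$ to the ${\rm A}$ half of $\ketbra{\Phi^+}_{\rm AB}$ and then sends it through $\mathcal{N}_{\rm A}$, Bob's state is $\sigma_x = (\mathcal{N}_{\rm A}\circ\mathcal{U}_x\otimes\mathcal{I}_{\rm B})(\ketbra{\Phi^+}_{\rm AB})$. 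Using the identity $(\mathcal{U}_x\otimes\mathcal{I})(\ketbra{\Phi^+}) = (\mathbb{I}\otimes U_x^{t})\ketbra{\Phi^+}(\mathbb{I}\otimes \bar U_x)$ (the ``ricochet'' / transpose trick for the maximally entangled state), one can move the encoding entirely onto Bob's side, so $\sigma_x = (\mathbb{I}_{\rm A}\otimes U_x^{t})\, \choi{N}_{\rm AB}\, (\mathbb{I}_{\rm A}\otimes \bar U_x)$. The key structural fact is then that $\sigma_x$ is unitarily equivalent (by a local unitary on ${\rm B}$) to the Choi-state $\choi{N}_{\rm AB}$, and therefore $\mathrm{rank}(\sigma_x) = \mathrm{rank}(\choi{N}_{\rm AB}) = r^{\mathcal N}_c$ for every $x$, independently of the encoding.

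Next I would feed this into Lemma~\ref{lemma1}. Bob succeeding at conclusive $k$-state exclusion in Task~\ref{task1} is exactly conclusive $k$-state exclusion (weak or strong) on the set $\{\sigma_x\}_{x=1}^N$ living in the $d^2$-dimensional space ${\rm AB}$. By Lemma~\ref{lemma1}, this requires $\sum_{x=1}^N \Pi_x \le (N-k)\mathbb{I}_{\rm AB}$, where $\Pi_x$ is the projector onto $\mathrm{supp}(\sigma_x)$. Taking the trace of both sides gives $\sum_x \mathrm{tr}[\Pi_x] \le (N-k)\,d^2$. But $\mathrm{tr}[\Pi_x] = \mathrm{rank}(\sigma_x) = r^{\mathcal N}_c$ for all $x$ by the previous step, so the left-hand side is exactly $N r^{\mathcal N}_c$, yielding $N r^{\mathcal N}_c \le (N-k) d^2$, i.e. $k \le N(d^2 - r^{\mathcal N}_c)/d^2$. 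Since $k$ is an integer, we may take the floor, which is precisely Eq.~\eqref{Theorem 1}. I would also remark that this bound is attained by the encoding/decoding strategy only when the geometry of the $\Pi_x$ allows equality in Lemma~\ref{lemma1}, consistent with the saturation discussion following that lemma.

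The main obstacle — really the only non-routine point — is establishing cleanly that every sent state has the same rank as the Choi-state and that this rank is unaffected by Alice's choice of unitary encoding. This hinges on the transpose/ricochet identity for $\ket{\Phi^+}$ and on the invariance of rank under the local unitary $\mathbb{I}_{\rm A}\otimes U_x^{t}$; the latter is immediate once the former is set up correctly, but one must be careful that the encoding unitary $U_x$ acts on system ${\rm A}$ while the resulting conjugation lands on system ${\rm B}$, and that $\mathcal{N}_{\rm A}$ acting afterward does not interfere with the ${\rm B}$-side conjugation (it does not, since $\mathcal{N}$ and $U_x^{t}$ act on different tensor factors, so they commute and $\sigma_x = (\mathbb{I}\otimes U_x^{t})\choi{N}(\mathbb{I}\otimes\bar U_x)$ holds). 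A secondary, minor point is to note that Task~\ref{task1} is genuinely an instance of sub-channel exclusion (the instrument being $\{\mathcal N\circ\mathcal U_x\}$ up to normalisation of the probabilities, or simply treating $\{\sigma_x\}$ directly as a state-exclusion instance), so that the ``weak or strong'' clause of Lemma~\ref{lemma1} applies in both readings and the bound is universal over all decoding POVMs.
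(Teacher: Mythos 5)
Your proposal is correct and follows essentially the same route as the paper's own proof: use the transpose/ricochet identity to write each received state as a local-unitary conjugation of the Choi-state, conclude all states have rank $r_c^{\mathcal{N}}$, then apply Lemma~\ref{lemma1} and take the trace to get $Nr_c^{\mathcal{N}} \leq (N-k)d^{2}$ before flooring. No gaps.
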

See Appendix III for the proof.
\CY{Note that Eq.~\eqref{Theorem 1} is tight, as the equality can be \bentwo{saturated}. To see this,}
applying Result~\ref{result1}, it can immediately be seen that if $\mathcal{N}$ is a {\em depolarising channel}, $\depola{p}(\rho) \coloneqq p \rho + (1-p)\textrm{tr}[\rho]~\mathbb{I}/d $, then $k=0$. This is because the Choi-states of depolarising channels are full-rank, $\choiRank = d^{2}$, for all $p$. Hence, Bob can say nothing with certainty about the message encoded by Alice when $\mathcal{N}$ is a depolarising channel. Consider instead that Alice is trying to perform super-dense coding, encoding one of $d^{2}$ bit-strings into a maximally entangled state with local dimension $d$ using the {\em Heisenberg-Weyl operators}~\cite{watrous2018theory}. She then sends her half of the state to Bob via a {\em dephasing channel}, \mbox{$\dephase{p}(\rho) \coloneqq p\rho + (1-p)\sum_{n=0}^{d-1} \dyad{n} \rho \dyad{n}$}, which has $\choiRank=d$. Result~\ref{result1} then implies that $k$-exclusion is possible for $k \leq d^2-d$. If Bob measures the POVM that projects into the Bell basis, \CY{Eq.~\eqref{Theorem 1} is saturated in this instance, since Bob can} perform conclusive weak $(d^2-d)$-exclusion. \CY{Hence, Result~\ref{result1} is tight.} See Supplementary Material B for details.  

\bentwo{Result.\ref{result1} is derived under minimal assumptions about dynamics --- namely, that they are modelled by quantum channels --- making it broadly applicable. Moreover, it is fundamental because, whilst tight, one can never improve on it. This fundamental bound on an operational task (Task \ref{task1}) can then be used to provide the Choi-rank with an operational interpretation --- it is the quantity that limits the number of bit-strings that could ever be excluded in Task \ref{task1}.}

\ben{A natural extension to \CY{Result~\ref{result1}} is to ask if it also holds for encodings via general quantum channels in Task~\ref{task1}, rather then just unitary channels. By noting that unital channels are rank-non-decreasing (see Supplementary Material C) and that the transpose of a unital channel is still a unital channel~\cite{stratton2024}, Result~\ref{result1} can initially be expanded to include all encodings via unital channels. We then conjecture that Result~\ref{result1} generalises to include encodings via general (non-untial) channels, leaving the proof for future work.} 


{\bf\em Discussions.---}
We give the Choi-rank a novel operational interpretation as the fundamental limit on \ben{an} entanglement-assisted exclusion tasks. To drive this result, a necessary condition for conclusive $k$-state exclusion has been presented, and the notion of weak and strong state exclusion has been introduced. This condition allows the viability of conclusive $k$-state exclusion to be assessed without the need to first reformulate the set and apply the conditions for $1$-state exclusion. Although, by considering $k=1$, this also adds to the conditions for conclusive $1$-state exclusion already present in the literature~\cite{PhysRevA.66.062111, Bandyopadhyay_2014, Heinosaari_2018, mishra2023optimal, Knee_2017, johnston2023tight}. Whilst it is known that this condition is not sufficient for strong-state exclusion, it would be interesting to know if it is sufficient for weak-state exclusion.

There are several directions in which Result~\ref{result1} could be generalised. Firstly, whilst Result~\ref{result1} holds for all possible unital-encoding and (general) decoding strategies, it is unknown if it holds for all initial states shared between Alice and Bob. It follows from the definition of the Choi-state that Result~\ref{result1} holds for any full-Schmidt-rank state shared between Alice and Bob. And, intuitively, one would imagine that using a less entangled initial state could only reduce one's ability to succeed at the task. This intuition arises from the knowledge that entanglement is a resource for super-dense coding, which is a special case of Task~\ref{task1}. Understanding this would enable us to determine the underlying resources of Task~\ref{task1}. \ben{Secondly, as stated in the above conjecture, Result~\ref{result1} could be generalising to include general (non-unital) encoding strategies.}

Understanding how these extensions affect one's ability to succeed in the task will help assess the boundaries of the limitations imposed by a channel's Choi-rank. Moreover, it will allow the significance of this task in quantifying resources to be assessed~\cite{PhysRevLett.125.110401, Ye2021, uola2020all,hsieh2023quantum,Ducuara2022PRXQ,Ducuara2023,Ducuara2023PRL}.


\begin{acknowledgments}
\CY{\bf\em Acknowledgments.---}B.S.~acknowledges support from UK EPSRC (EP/SO23607/1). P.S.~and C.-Y.H.~acknowledge support from a Royal Society URF (NFQI). C.-Y.H.~also acknowledges support from the ERC Advanced Grant (FLQuant). P.S.~is a CIFAR Azrieli Global Scholar in the Quantum Information Science Programme. 
\end{acknowledgments}

\appendix

\CY{\bf\em Appendix I: Proof of Lemma~\ref{lemma1}.---}
The following lemma is first proved. 
\begin{lemma} \label{lemma2}
    If $\sigma \geq 0$ is some state and $0 \leq Q \leq \mathbb{I}$ some operator such that 
$\textrm{tr} \big[ \sigma Q \big] = 0$,
    then 
        $\Pi_{\sigma} \leq \mathbb{I} - Q,$
    where $\Pi_{\sigma}$ is the projector onto the support of $\sigma$. 
\end{lemma}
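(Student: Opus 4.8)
The plan is to promote the scalar condition $\textrm{tr}[\sigma Q] = 0$ to the operator identity $\sigma Q = Q\sigma = 0$, and then to read off both the support inclusion and the desired operator inequality from the orthogonal decomposition $\mathcal{H} = \textrm{supp}(\sigma)\oplus\textrm{supp}(\sigma)^{\perp}$.

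First I would exploit positivity of both operators. Since $\sigma\geq0$ and $Q\geq0$, set $A \coloneqq \sqrt{\sigma}\,\sqrt{Q}$; then $\textrm{tr}[A^{\dagger}A] = \textrm{tr}\big[\sqrt{Q}\,\sigma\,\sqrt{Q}\big] = \textrm{tr}[\sigma Q] = 0$ by cyclicity of the trace, and $\textrm{tr}[A^{\dagger}A]=0$ forces $A=0$, i.e.\ $\sqrt{\sigma}\,\sqrt{Q} = 0$. Multiplying on the left by $\sqrt{\sigma}$ and on the right by $\sqrt{Q}$ gives $\sigma Q = 0$, and taking the adjoint gives $Q\sigma = 0$.

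Next, $Q\sigma = 0$ says that the range of $\sigma$ — which is exactly the support of $\sigma$, i.e.\ the image of $\Pi_{\sigma}$ — lies in the kernel of $Q$; equivalently $Q\Pi_{\sigma} = 0$, and hence $\Pi_{\sigma}Q = 0$ as well by taking adjoints. Writing $\Pi_{\sigma}^{\perp}\coloneqq\mathbb{I}-\Pi_{\sigma}$, this is the statement $Q = \Pi_{\sigma}^{\perp}Q\,\Pi_{\sigma}^{\perp}$. Then for any vector $\ket{v}$, put $\ket{w}\coloneqq\Pi_{\sigma}^{\perp}\ket{v}$; using $0\leq Q\leq\mathbb{I}$ we obtain $\bra{v}Q\ket{v} = \bra{w}Q\ket{w}\leq\braket{w} = \bra{v}\Pi_{\sigma}^{\perp}\ket{v}$. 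Hence $Q\leq\mathbb{I}-\Pi_{\sigma}$, which rearranges to the claimed bound $\Pi_{\sigma}\leq\mathbb{I}-Q$. (Equivalently, one may pass to $2\times2$ block form with respect to $\textrm{supp}(\sigma)\oplus\textrm{supp}(\sigma)^{\perp}$, in which $Q$ is supported only on the second block and the inequality is then immediate.)

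The argument is short, and the only step demanding genuine care is the first one: converting the single scalar equation $\textrm{tr}[\sigma Q]=0$ into the operator equation $\sigma Q=0$ really uses positivity of both $\sigma$ and $Q$ (it would fail for indefinite $Q$), via the fact that a positive semi-definite operator of vanishing trace must be zero. Everything downstream is elementary linear algebra.
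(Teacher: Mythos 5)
Your proof is correct and follows essentially the same route as the paper's: both arguments first upgrade $\textrm{tr}[\sigma Q]=0$ to the statement that $Q$ annihilates $\textrm{supp}(\sigma)$, and then use $0\leq Q\leq\mathbb{I}$ to conclude $Q\leq\mathbb{I}-\Pi_{\sigma}$. The only cosmetic difference is in the first step, where you use the square-root factorisation $\sqrt{\sigma}\sqrt{Q}=0$ while the paper lower-bounds $\sigma\geq\mu_{\rm min}(\sigma)\Pi_{\sigma}$ to get $\Pi_{\sigma}Q\Pi_{\sigma}=0$; both are standard and equally rigorous.
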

\begin{proof}
    Firstly, note that $\sigma \geq \mu_{\rm min}
    (\sigma)\Pi_{\sigma}$, where \mbox{$\mu_{\rm min}(\sigma)>0$} is the minimal positive eigenvalue of $\sigma$. Therefore, 
\mbox{$0 = \textrm{tr} \big[ \sigma Q \big]\geq 
\mu_{\rm min}(\sigma) \textrm{tr} \big[ \Pi_{\sigma} Q  \Pi_{\sigma} \big].$}
   Given $ \mu_{\rm min}(\sigma) > 0$ and $ \Pi_{\sigma} Q  \Pi_{\sigma} \geq 0$, it can be seen that 
$
\Pi_{\sigma} Q  \Pi_{\sigma} = 0. 
$
   Hence, 
    \mbox{$\Pi_{\sigma} \leq \Pi_{\textrm{ker}(Q)}\leq \mathbb{I} -  \Pi_{\textrm{supp}(Q)}$,}
   where $\Pi_{\textrm{ker}(Q)}$ and $\Pi_{\textrm{supp}(Q)}$ are the projectors onto the kernel and support of Q, respectively. Finally, given that $Q \leq \Pi_{\textrm{supp}(Q)}$, we have that
   \mbox{$
    Q \leq \Pi_{\textrm{supp}(Q)} \leq \mathbb{I} - \Pi_{\sigma},
    $}
   completing the proof. 
\end{proof}
The proof of Lemma~\ref{lemma1} is now given, employing Lemma~\ref{lemma2}. 

\begin{proof}
A referee has an set of states  $\{\rho_{x}\}_{x=1}^{N}$. Let $\mathcal{Y}_{(N,k)}$ be the set of all subsets of length $k$ of the set $\{1, \ldots ,N\}$~\cite{Bandyopadhyay_2014}. 
During each round of the task, the referee randomly generates a label $x$ and sends the state $\rho_x$ to the player. The player applies a POVM on the state $\rho_x$ and aims to output a set of $k$ labels $Y \in \mathcal{Y}_{(N,k)}$ such that $x \notin Y$.
Such a measurement will be a POVM with ${N \choose k}$ elements, denoted by ${\bf S} \coloneqq \{S_Y\}_{Y~\in~\mathcal{Y}_{(N,k)}}$. The player is able to perform conclusive $k$-state exclusion if for all $Y~\in~\mathcal{Y}_{(N,k)}$ there exists an $S_{Y}$ such that  
$
\textrm{tr}\left[S_{Y}\rho_{y}\right] = 0$~$\forall ~ y \in Y.
$
If the player gets the measurement outcome associated to $S_{Y}$, they can output the set $Y$ knowing with certainty the referee could not have sent any of the states in the set $\{\rho_{y} \}_{y \in Y}$.
By defining the operator
    $R_{Y} \coloneqq \sum_{y \in Y} \rho_{y},$
the conclusive $k$-state exclusion task can then be succinctly expressed as 
\begin{equation}
    \textrm{tr} \left[ S_{Y} R_{Y}\right] = 0 \quad\forall~Y \in \mathcal{Y}_{(N,k)}. \label{k state exlusion equation}
\end{equation}
Letting $L\coloneqq{N \choose k}$, one can order ${\bf S}$'s elements and write \mbox{${\bf S}\coloneqq\{S_l\}_{l=1}^L$}. 
Similarly, we also order the operators $R_Y$'s by the same label and write $\{R_l\}_{l=1}^L$.
Now, it can be seen that a given $x\in\{1,...~,N\}$ will appear in \mbox{${N-1 \choose k-1}=Lk/N$} many subsets in $\mathcal{Y}_{(N, k)}$. This means that, for each state $\rho_x$, there are $Lk/N$ many operators $R_l$'s that contain it.
For each $x\in\{1,...,N\}$, let $X_x$ denote the set of all labels $l$ corresponding to these $R_l$'s.
Each $X_x$ thus contain $Lk/N$ many labels, and we have ${\rm tr}(S_l\rho_x)=0$ $\forall\,l\in X_x$.
Equation~\eqref{k state exlusion equation} thus implies
\begin{equation}
    \textrm{tr} \Bigg[ \rho_{x} \Bigg( \sum_{l \in X_x} S_{l} \Bigg) \Bigg] = 0 \quad \forall~x \in \{1,...~,N\}.
\end{equation}
Using Lemma~\ref{lemma2}, the above $N$ equations implies
\begin{equation}
     \Pi_{x} \leq \mathbb{I} -  \sum_{l \in X_x} S_{l} \quad \forall~x \in \{1,...~,N\}, \label{projector individual equations}
\end{equation}
where $\Pi_{x}$ is the projector onto the support of the state $\rho_{x}$. Summing over the $N$ individual conditions in Eq.~\eqref{projector individual equations} gives 
\begin{equation}
    \sum_{x=1}^{N} \Pi_{x} \leq N \mathbb{I} - \sum_{x=1}^N\sum_{l \in X_{x}} S_{l}. \label{projector summed equations}
\end{equation}
By Eq.~\eqref{k state exlusion equation}, for each $l$, there are exactly $k$ many possible labels $x$'s such that \mbox{$l\in X_x$}.
Hence, 
each POVM element $S_l$ appears exactly $k$ times, meaning that \mbox{$\sum_{x=1}^N\sum_{l \in X_{x}} S_{l} = \sum_{l=1}^L kS_l = k \mathbb{I}$} and thus \mbox{$\sum_{x=1}^{N} \Pi_{x} \leq (N-k) \mathbb{I}$}, as desired.
\end{proof}

\CY{\bf\em Appendix II: Proof of Corollary~\ref{lowerBoundOnKcorollary}.---}
By comparison to Lemma~\ref{lemma1}, it can be seen that $\lambda$ used in the definition of $D_{\rm max}$ can be related to $N-k$ when considering the largest possible $k$. Rearranging and noting that $k$ must be an integer gives the first inequality. By taking the trace of both sides of Lemma~\ref{lemma1}, the second inequality can be shown. The trace of the left-hand side is lower bounded by $N$, which is achieved when all $\rho_x$'s are rank-one projectors, i.e., all of them are pure states. Once again, the floor is taken to ensure $k$ is an integer. In this best-case scenario where all $\rho_x$'s are rank-one projectors, we have $\alpha = N$. It can then be seen that the second inequality always upper bounds the first given that $0 \leq D_{\textrm{max}} (\psi \| \sigma) ~\forall~\psi, \sigma$.   
\CY{\hfill$\square$}

\CY{\bf\em Appendix III: Proof of Result~\ref{result1}.---}
Alice and Bob share a maximally entangled state $\ket{\Phi^+}_{\rm AB} =\sum_{i=0}^{d-1} \ket{ii}_{\rm AB}/\sqrt{d}$ with an equal local dimension $d$ (here, ${\rm A,B}$ denotes Alice's and Bob's systems). If Alice encodes the bit-string $x$ (via unitary $U_{x,{\rm A}}$ in ${\rm A}$) and then sends her half of the state to Bob via the channel $\mathcal{N}_{\rm A}$, Bob has the state
\begin{align}
\begin{split}
    \rho^{x|\mathcal{N}}_{\rm AB} &\coloneqq (\mathcal{N}_{\rm A} \otimes \mathcal{I}_{\rm B})\circ(U_{x,{\rm A}} \otimes \mathcal{I}_{\rm B}) \big( \maxKB_{\rm AB} \big)\\
    &=(\mathcal{I}_{\rm A} \otimes U_{x,{\rm B}}^{t}) \left(\choi{N}_{\rm AB}\right).
\end{split}
\end{align}
Note that, after the channel $\mathcal{N}_{\rm A}$, Bob has the whole bipartite state.
From Bob's point of view, he, therefore, has a state from the set $\{\rho^{x|\mathcal{N}}_{\rm AB}\}_{x=1}^N$.
Note that all elements of this set have the same rank --- the Choi-rank, $r_{c}^{\mathcal{N}}$, of the channel $\mathcal{N}_{\rm A}$. This is due to the rank of states being invariant under unitary channels. 
Bob now aims to perform conclusive state exclusion on this set and hence Lemma~\ref{lemma1} can be applied. Given all states in the set are of rank $r_{c}^{\mathcal{N}}$, all projectors onto the support of those states are of rank $r_{c}^{\mathcal{N}}$. Taking the trace of both sides of Eq.~\eqref{Eq:Lemma1} in Lemma~\ref{lemma1} therefore gives 
    $
    Nr_{c}^{\mathcal{N}} \leq  (N-k)d^{2}. 
    $
Rearranging and noting that $k$ must be an integer completes the proof. 
\CY{\hfill$\square$}

\bibliographystyle{apsrev4-1}
\bibliography{mainTextBib}

\onecolumngrid










\title{Supplementary Material: Operational Interpretation of the Choi Rank Through Exclusion Tasks}

\author{Benjamin Stratton}
\email{ben.stratton@bristol.ac.uk}
\affiliation{Quantum Engineering Centre for Doctoral Training, H. H. Wills Physics Laboratory and Department of Electrical \& Electronic Engineering, University of Bristol, BS8 1FD, UK}
\affiliation{H.H. Wills Physics Laboratory, University of Bristol, Tyndall Avenue, Bristol, BS8 1TL, UK}

\author{Chung-Yun Hsieh}
\affiliation{H.H. Wills Physics Laboratory, University of Bristol,
Tyndall Avenue, Bristol, BS8 1TL, UK}

\author{Paul Skrzypczyk}
\affiliation{H.H. Wills Physics Laboratory, University of Bristol,
Tyndall Avenue, Bristol, BS8 1TL, UK}
\affiliation{CIFAR Azrieli Global Scholars Program, CIFAR, Toronto Canada}

\date{\today}

\maketitle

\onecolumngrid


\section{Supplementary Material A: Weak and Strong Exclusion Tasks \label{Supplementary MaterialA}}

Within the literature, elements of the definition of state exclusion differ. Here, we present a unifying framework for the different definitions through the notion of weak and strong state exclusion. The complete definitions are restated here for clarity. 
\\
\\
\textbf{Strong State Exclusion}: Given a set of states $\{ \rho_{x} \}^{N}_{x=1}$, strong conclusive $1$-state exclusion is possible if there exists a POVM $T=\{T_{a}\}_{a=1}^{N}$ such that 
    \begin{equation}
         \textrm{tr} \big[ T_{x} \rho_{x} \big] = 0 \quad\forall~x\in\{1, \ldots, N\} \hspace{0.2cm} \textrm{and} \hspace{0.2cm} \sum_{x=1}^{N} \textrm{tr} \big[ T_{a} \rho_{x} ] \neq 0\quad\forall~a\in\{1,\ldots,N\}.
    \end{equation}
\\
\\
\textbf{Weak State Exclusion}: Given a set of states \CYtwo{$\{ \rho_{x} \}^{N}_{x=1}$}, weak conclusive $1$-state exclusion is possible if there exists a POVM $T=\{T_{a}\}_{a=1}^{N}$ such that 
    \begin{equation}
         \textrm{tr} \big[ T_{x} \rho_{x} \big] = 0 \quad\forall~x\in\{1, \ldots, N\}.
    \end{equation}
\\
It is clearly the case from the above definitions that weak state exclusion is a requisite for strong state exclusion --- justifying their respective names. Moreover, if one is able to perform strong state exclusion, they can trivially convert this into weak state exclusion via classical post-processing of the measurement outcomes. It can also be seen that if any of the states in $\{ \rho_{x} \}^{N}_{x=1}$ are full-rank, then strong state exclusion is never possible. This is due to $\textrm{tr}[T_{g}\rho_{x}] = 0$ if and only if $T_{g} = 0$ when $\rho_{x}$ is full-rank.

The above definition of strong conclusive $1$-state exclusion on $N$ states means it is defined as the existence of an $N$ element POVM where each element excludes a different state from $\{ \rho_{x} \}^{N}_{x=1}$ with certainty. It can be the case that some POVM elements exclude multiple states, but each element must exclude at least one different state. The above definition of weak conclusive $1$-state exclusion given above is then defined to be the existence of a POVM with $L$ non-zero elements (where $L\leq N$) that each conclusively exclude a different state from a subset of $\{ \rho_{x} \}^{N}_{x=1}$ of size $L$. Using this terminology, one could define the ability to perform weak state exclusion on $\{ \rho_{x} \}^{N}_{x=1}$ as the ability to perform strong state exclusion on some subset of $\{ \rho_{x} \}^{N}_{x=1}$. 

When considering $k$-state exclusion, strong state exclusion means there exists a POVM that can exclude all possible subsets of $\{ \rho_{x} \}^{N}_{x=1}$ of length $k$, with one subset being excluded with certainty with each measurement. Weak $k$-state exclusion then means that, whilst one subset is still excluded with each measurement, not all subsets of $\{ \rho_{x} \}^{N}_{x=1}$ of length $k$ are excluded. Note, this is equivalent to considering strong and weak exclusion on the equivalent $1$-state exclusion task. 

Previously, our proposed definition of weak state exclusion has been used as the general definition of state exclusion~\cite{mishra2023optimal}. However, this definition has attracted (indirect) criticism for trivialising the problem of state exclusion~\cite{Heinosaari_2018}, as if a player can perform conclusive $1$-state exclusion on any two states $\{\rho_1, \rho_2\}\subseteq\{ \rho_{x} \}^{N}_{x=1}$ using the two-element POVM $\{M_{1}, M_{2}\}$, then by definition $1$-state exclusion could trivially be performed on the whole set $\{ \rho_{x} \}^{N}_{x=1}$ by considering the $N$ element POVM
\begin{equation}
    \{T_{1} = M_{1}, T_{2} = M_{2}, T_{3} = 0, ~\ldots~, T_{N} = 0\}.
\end{equation}
Each measurement outcome would exclude one state, but some states would never be excluded. Whilst such an example is indeed trivial, there exist plenty of intermediate scenarios between this and strong state exclusion that could prove useful in operationally motivated tasks. For example, consider a task where player A is trying to communicate to player B which of $N$ wires can be cut to diffuse a bomb~\cite{PhysRevLett.125.110401}. Player A can aim to send exclusionary information to player B that says ``do not cut wire $\alpha$ or wire $\beta$." They do this by encoding exclusionary information into some quantum state and sending it to player B. Player B then measures a POVM that performs conclusive $2$-state exclusion, allowing them to say with certainty that they do not have states associated to the label $\alpha$ or $\beta$ and hence those are the wires not to cut. To succeed, player B must output a list of wires not to cut; he does not need to be able to exclude all possible subsets of wires of length $2$. Hence, weak $k$-state exclusion would still prove useful in this task at preventing accidental detonation.

\section{Supplementary Material B: Dephasing Channel Example \label{SupplementaryMaterialB}}

Consider that \CYtwo{Alice (A) and Bob (B) share a $d$-dimensional maximally entangled state
$
    \ket{\Phi^{+}}_{\rm AB} =  \sum_{n=0}^{d-1} \ket{nn}_{\rm AB}/\sqrt{d},
$}
and Alice wants to send one of $N=d^{2}$ bit strings to Bob. Alice encodes the bit-string she wants to send to Bob into her half of the maximally entangled state using the so-called Heisenberg-Weyl operators \cite{watrous2018theory} [see Eq.~\eqref{Eq:Weyl} below for their definition]. This is the typical generalisation of super-dense coding to higher dimensions. 
Now, suppose that Alice sends her half of the encoded maximally entangled state to Bob via a {\em dephasing channel}, which is defined by
\begin{equation}
    \dephase{p}(\psi) \coloneqq p\psi + (1-p)\CYtwo{\sum_{n=1}^{d-1} \ketbra{n} \psi \ketbra{n}}, ~ ~ p \in [0,1]. 
\end{equation}
Due to the noise introduced by the dephasing channel, Bob can not say with certainty what bit-string Alice did encode, he can instead try and say which bit-string Alice did not encode. The dephasing channel has a Choi-rank of $\choiRank=d$ (proved below), and hence Result~\ref{result1} implies that {\em $k$-state exclusion is possible for every $k \leq d^2 - d$.} The following lemma shows that Result~\ref{result1} is tight in this scenario, with Bob able to exclude $k = d^2-d$ bit-strings that Alice could have encoded.   

\begin{lemma} 
If Alice encodes one of $d^{2}$ bit-strings into the $d$-dimensional maximally entangled state $\ket{\Phi^{+}}_{\rm AB}$ using the Heisenberg-Weyl operators, and sends her half of the state to Bob using the $d$-dimensional dephasing channel, Bob is able to perform conclusive weak $(d^2-d)$-state exclusion by measuring in the Bell basis (i.e., a basis consisting of maximally entangled states). 
\end{lemma}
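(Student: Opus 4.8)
The plan is to compute, explicitly, the states Bob holds after encoding and transmission, read off their supports, and then verify by a direct overlap computation that projecting onto the Bell basis conclusively excludes $d^{2}-d$ labels for \emph{every} measurement outcome. First I would fix notation for the Heisenberg--Weyl operators $W_{a,b}\coloneqq\sum_{n=0}^{d-1}\omega^{an}\ketbra{n\oplus b}{n}$, where $\omega\coloneqq e^{2\pi i/d}$ and $\oplus,\ominus$ denote addition and subtraction modulo $d$, for $a,b\in\{0,\ldots,d-1\}$; these are the $d^{2}$ unitaries Alice uses to encode. The corresponding Bell states $\ket{\Phi_{a,b}}\coloneqq(W_{a,b}\otimes\mathbb{I})\ket{\Phi^+}_{\rm AB}$ form an orthonormal basis of the $d^{2}$-dimensional joint space, and rewriting $\ket{\Phi_{a,b}}=\frac{1}{\sqrt{d}}\sum_{m=0}^{d-1}\omega^{a(m\ominus b)}\ket{m}\ket{m\ominus b}$ shows that $\ket{\Phi_{a,b}}$ lies in the $d$-dimensional subspace $\mathcal{V}_{b}\coloneqq\textrm{span}\{\ket{m}\ket{m\ominus b}:m=0,\ldots,d-1\}$, which depends only on $b$.

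Next I would compute Bob's state $\rho^{(a,b)}$: after Alice encodes $(a,b)$ the shared state is $\ketbra{\Phi_{a,b}}$, and applying $\dephase{p}$ to her half gives
\[
  \rho^{(a,b)}=p\,\ketbra{\Phi_{a,b}}+\frac{1-p}{d}\sum_{m=0}^{d-1}\ketbra{m,m\ominus b}.
\]
Both summands lie in $\mathcal{V}_{b}$, so for $p\in[0,1)$ the projector onto $\textrm{supp}(\rho^{(a,b)})$ equals $\Pi_{b}\coloneqq\sum_{m=0}^{d-1}\ketbra{m,m\ominus b}$, which has rank $d$ and depends only on $b$; the case $a=b=0$ also records that $\choiRank=d$ for $\dephase{p}$ with $p<1$, consistent with the hypothesis of Result~\ref{result1}.

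The core step is then to evaluate $\textrm{tr}\!\left[S_{(c,e)}\,\rho^{(a,b)}\right]$ for the Bell-basis POVM $S_{(c,e)}\coloneqq\ketbra{\Phi_{c,e}}$. Orthonormality of the Bell basis handles the $p$-term, and a short character-sum evaluation of $\braket{\Phi_{c,e}}{m,m\ominus b}$ handles the rest, yielding
\[
  \textrm{tr}\!\left[S_{(c,e)}\,\rho^{(a,b)}\right]=p\,\delta_{a,c}\,\delta_{b,e}+\frac{1-p}{d}\,\delta_{b,e},
\]
which vanishes precisely when $e\neq b$. Hence, whatever outcome $(c,e)$ Bob records, the relation $\textrm{tr}[S_{(c,e)}\rho^{(a,b)}]=0$ holds for all $d^{2}-d$ labels $(a,b)$ with $b\neq e$, so he may output that set and be certain Alice did not encode any of them. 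This is conclusive weak $(d^{2}-d)$-state exclusion --- weak because only $d$ distinct subsets are ever certified (one per value of $e$), far fewer than all ${d^{2}\choose d^{2}-d}$ of them --- and it saturates the bound $k\leq\big\lfloor d^{2}(d^{2}-\choiRank)/d^{2}\big\rfloor=d^{2}-d$ of Result~\ref{result1}, so the bound is tight here. The only genuinely delicate point in carrying this out is the bookkeeping: keeping the modular index shifts in $W_{a,b}$ and in $\ket{\Phi_{a,b}}$ mutually consistent across the two overlap computations; everything else follows from orthonormality of the Bell basis and the block structure of $\rho^{(a,b)}$.
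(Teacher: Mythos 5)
Your proof is correct, and it reaches the paper's conclusion by a genuinely more elementary route. The paper first pushes Alice's encoding unitary onto Bob's side via the transpose trick, $\rho^{(a,b)}=(\mathcal{I}_{\rm A}\otimes\mathcal{W}^{t}_{a,b})(\mathcal{J}^{\mathcal{D}_p^{\rm ph}}_{\rm AB})$, expands the Choi-state of $\dephase{p}$ in the Bell basis, and then uses the Weyl identities $W_{a,b}^{t}=\Omega^{-ab}W_{-a,b}$ and $W_{a,b}W_{n,m}=\Omega^{bn}W_{a+n,b+m}$ to show that each $\rho^{(a,b)}$ is supported on a set of $d$ Bell states labelled by the shift index alone; the exclusion then follows from orthogonality of Bell states with different shift indices. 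You instead stay in the computational basis: you compute $\rho^{(a,b)}=p\ketbra{\Phi_{a,b}}+\tfrac{1-p}{d}\sum_m\ketbra{m,m\ominus b}$ directly, observe that its support is the diagonal-shift block $\mathcal{V}_b$ (for $p<1$), and verify exclusion by the explicit overlap $\textrm{tr}[S_{(c,e)}\rho^{(a,b)}]=p\,\delta_{a,c}\delta_{b,e}+\tfrac{1-p}{d}\delta_{b,e}$. These are two descriptions of the same $d$-dimensional supports (your $\mathcal{V}_b$ is exactly the span of the $d$ Bell states the paper identifies), but your computation needs nothing beyond orthonormality of the Bell basis and a one-line character sum, and it has the added virtue of exhibiting the nonzero outcome probabilities explicitly, which confirms the measurement genuinely certifies only $d$ of the ${d^2 \choose d^2-d}$ subsets and hence that the exclusion is weak. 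What the paper's route buys in exchange is the explicit appearance of the Choi-state and its rank in the middle of the argument, which is the conceptual point of the example as an illustration of Result~\ref{result1}; you recover that connection separately by noting $\choiRank=d$ from the $a=b=0$ case and checking tightness of Eq.~\eqref{Theorem 1}. One small caveat worth stating explicitly, which applies equally to the paper's version: the identification of the support with the full rank-$d$ block (equivalently, $\choiRank=d$) requires $p<1$; at $p=1$ the channel is unitary and the lemma's premise degenerates, though your overlap formula shows the exclusion still goes through trivially there.
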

\begin{proof}
In a space of dimension $d$, the {\em Heisenberg-Weyl operators} are defined as~\cite{watrous2018theory} 
\begin{equation}\label{Eq:Weyl}
W_{a,b} \coloneqq U^{a}V^{b} = \CYtwo{\sum_{n=0}^{d-1} \Omega^{bn} \ket{n+a}\bra{n},}
\end{equation}
where $a,b\in\{0,1,\ldots,d-1 \}$ are cyclic and,
\begin{equation}
U \coloneqq \CYtwo{\sum_{n=0}^{d-1} \ket{n+1}\bra{n}, ~ ~ V \coloneqq \sum^{d-1}_{n=0} \Omega^{n} \ketbra{n},} ~ ~ \Omega \coloneqq e^{\frac{2 \pi i}{d}}.
\end{equation} 
Using the Heisenberg-Weyl operators, a maximally entangled basis in AB can be generated as 
\begin{equation}
\ket{\Phi_{ab}^{+}}_{\rm AB}\coloneqq(\mathbb{I}_{\rm A} \otimes W_{a,b}) \big( \ket{\Phi^{+}}_{\rm AB} \big) \quad \forall\,a,b\in\{0,1,\ldots,d-1 \}.
\end{equation}
Also, in this notation, we have $\ket{\Phi_{00}^{+}} = \ket{\Phi^{+}}_{\rm AB}$.
In the scenario that we outline above, Bob is performing exclusion on the following set of $d^2$ many bipartite states,
\begin{align}\label{Eq:Bobs states}
    \rho^{(a,b)|\mathcal{N}}_{AB} \coloneqq (\dephase{p} \otimes \mathcal{I}_{\rm B})\circ(\mathcal{W}_{a,b} \otimes \mathcal{I}_{\rm B}) \big( \ketbra{\Phi^{+}}_{\rm AB} \big)= (\mathcal{I}_{\rm A} \otimes \mathcal{W}^{t}_{a,b}) \big( \mathcal{J}^{{\mathcal{D}_p^{\rm ph}}}_{\rm AB} \big) \quad \forall\,a,b\in\{0,1,\ldots,d-1 \},
\end{align}
where $\mathcal{W}_{a,b}[\cdot]\coloneqq W_{a,b}[\cdot]W_{a,b}^\dagger$, $\mathcal{W}^{t}_{a,b}[\cdot]\coloneqq W^{t}_{a,b}[\cdot] W^{t,\dagger}_{a,b}$, and $ \mathcal{J}^{{\mathcal{D}_p^{\rm ph}}}_{\rm AB}$ is the Choi-state of the dephasing channel in ${\rm AB}$ that is given by
\begin{align}
    \mathcal{J}^{ {\mathcal{D}_p^{\rm ph}} }_{\rm AB} \coloneqq (\dephase{p} \otimes \mathcal{I}_{\rm B}) \ketbra{\Phi^{+}}_{\rm AB} = p \ketbra{\Phi^{+}}_{\rm AB} + \frac{(1-p)}{d} \sum_{n=0}^{d-1} \ketbra{nn}_{\rm AB} =  \frac{1}{d} \sum_{n=0}^{d-1} \ketbra{nn}_{\rm AB} + \frac{p}{d} \sum_{n \neq j} \ket{nn} \bra{jj}_{\rm AB}.
\end{align}
By comparison of matrix elements, one can see that this can be rewritten as
\begin{align}
    \CYtwo{\mathcal{J}^{ {\mathcal{D}_p^{\rm ph}} }_{\rm AB} = \alpha \ketbra{\Phi_{00}^{+}}_{\rm AB} + (1-\alpha) \sum_{c=1}^{d-1} \ketbra{\Phi_{0c}^{+}}_{\rm AB},} \label{choiDephasingInBellBasis}
\end{align}
with
$
    \alpha \coloneqq 1 - \frac{1}{d}(d-1)(1-p).
$
From Eq.~\eqref{choiDephasingInBellBasis}, it can be seen that the Choi-rank of the dephasing channel is $d$, that is, $\choiRank=d$ when $\mathcal{N} = \mathcal{D}_p^{\rm ph}$. 
The set of states [i.e., Eq.~\eqref{Eq:Bobs states}] that Bob is performing exclusion on can, therefore, be written as 
\begin{align}
    \rho^{(a,b)|\mathcal{N}}_{AB} &=  (\mathcal{I}_{\rm A} \otimes \Omega^{-ab}\mathcal{W}_{-a,b}) \bigg( \alpha \ketbra{\Phi_{00}^{+}}_{\rm AB} + (1-\alpha) \sum_{c=1}^{d-1} (\mathcal{I}_{\rm A} \otimes \mathcal{W}_{0,c}) \big( \ketbra{\Phi_{00}^{+}}_{\rm AB} \big) \bigg)\quad \forall\,a,b\in\{0,1,\ldots,d-1 \} \\
    &= \Omega^{-ab} \alpha \ketbra{\Phi_{-a,b}^{+}}_{\rm AB} + (1-\alpha) \Omega^{-ab} \sum_{c=1}^{d-1} \ketbra{\Phi_{-a,b+c}^{+}}_{\rm AB} \quad \forall\,a,b\in\{0,1,\ldots,d-1 \}, \label{finalStatesDephasng}
\end{align}
where we have used the identities
        $
        W_{a, b} ^ {t} = \Omega^{-ab}W_{-a, b}
        $
        and 
        \mbox{$
        W_{a,b}W_{n, m} = \Omega^{bn}W_{a + n, b + m} = \Omega^{bn - am}W_{n, m}W_{a,b}
        $}
        for every $a,b,n,m$~\cite{watrous2018theory}.
From Eq.~\eqref{finalStatesDephasng}, it can be seen that any bit-string that Alice encodes using the operators 
\begin{equation}
    \{ W_{-a, b} : b \in\{0,1,\ldots,d-1\} \},
\end{equation}
will output states from the dephasing channel that have identical support. Therefore, if Bob measures the operator that projects into the Bell basis and gets an outcome associated to the POVM element $\ketbra{\Phi^{+}_{rs}}$, he knows Alice must have encoded her bit-string using one of the $d$ operators $\{ W_{r, b} : b ~\in~\{0,1,\ldots,d-1\} \}$, and hence she will have inputted one of the following $d$ states \CYtwo{into} the dephasing channel with certainty:
\begin{equation}
    \{ (\mathcal{I}_{\rm A} \otimes \mathcal{W}_{r,b}) \big( \ketbra{\Phi^{+}}_{\rm AB} \big) : b \in\{0,1,\ldots,d-1\} \}.
\end{equation}
 Bob can, therefore, exclude $k=d^{2}-d$ encoded bit-strings with certainty. As he can only exclude some subsets, this is a weak $(d^2-d)$-state exclusion.  
\end{proof}

\section{Supplementary Material C: Unital Channels are Rank Non-decreasing \label{SupplementaryMaterialC}}

Here, it is shown that the rank of states cannot decrease under unital channels. In what follows, ${\rm rank}(\rho)$ denotes the rank of the state $\rho$, i.e., it is the number of strictly positive eigenvalues that $\rho$ has.
\begin{lemma} \label{unitalChannelLemma}
    If $\mathcal{E}$ is a unital channel, then ${\rm rank}[\mathcal{E}(\rho)] \geq {\rm rank}(\rho)$ $\forall\,\rho$.
\end{lemma}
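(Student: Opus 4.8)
The cleanest route is to reduce the statement to Lemma~\ref{lemma2} by passing to the adjoint channel $\mathcal{E}^{\dagger}$, defined by $\textrm{tr}\big[A\,\mathcal{E}^{\dagger}(B)\big] = \textrm{tr}\big[\mathcal{E}(A)\,B\big]$ for all operators $A,B$. The first step is to record that $\mathcal{E}^{\dagger}$ is again a unital channel: it is completely positive automatically, it is trace-preserving precisely because $\mathcal{E}$ is unital, and it is unital precisely because $\mathcal{E}$ is trace-preserving. So both defining properties of the channel $\mathcal{E}$ enter the argument, one for each property of $\mathcal{E}^{\dagger}$ used below.

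Next, fix a state $\rho$, write $\Pi_{\rho}$ for the projector onto $\textrm{supp}(\rho)$, and set $P \coloneqq \mathbb{I} - \Pi_{\mathcal{E}(\rho)}$, the projector onto $\ker[\mathcal{E}(\rho)]$, so that $\textrm{tr}(P) = d - {\rm rank}[\mathcal{E}(\rho)]$. Since $P$ projects onto the kernel of the positive operator $\mathcal{E}(\rho)$, we have $\textrm{tr}\big[\mathcal{E}(\rho)\,P\big] = 0$, and hence, by the defining property of the adjoint, $\textrm{tr}\big[\rho\,\mathcal{E}^{\dagger}(P)\big] = 0$. Putting $Q \coloneqq \mathcal{E}^{\dagger}(P)$, one checks $0 \le Q \le \mathbb{I}$: the lower bound because $\mathcal{E}^{\dagger}$ is completely positive and $P \ge 0$; the upper bound because $\mathbb{I}-P \ge 0$ and positivity of $\mathcal{E}^{\dagger}$ give $\mathcal{E}^{\dagger}(\mathbb{I}) - \mathcal{E}^{\dagger}(P) \ge 0$, while $\mathcal{E}^{\dagger}(\mathbb{I}) = \mathbb{I}$ by unitality of $\mathcal{E}^{\dagger}$. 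The pair $(\rho,Q)$ then meets the hypotheses of Lemma~\ref{lemma2} (which, although stated for a normalised $\sigma$, has a proof using only $\sigma \ge 0$, so it applies verbatim; in any case the conclusion is insensitive to rescaling $\sigma$ by a positive constant), giving $\Pi_{\rho} \le \mathbb{I} - \mathcal{E}^{\dagger}(P)$. Taking the trace of both sides and using that $\mathcal{E}^{\dagger}$ is trace-preserving yields ${\rm rank}(\rho) = \textrm{tr}(\Pi_{\rho}) \le d - \textrm{tr}\big[\mathcal{E}^{\dagger}(P)\big] = d - \textrm{tr}(P) = {\rm rank}[\mathcal{E}(\rho)]$, which is the claim.

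The computation is essentially routine; unfolding $\textrm{tr}[\rho\,\mathcal{E}^{\dagger}(P)] = 0$ through a Kraus representation $\{K_{x}\}$ of $\mathcal{E}$ just says $K_{x}^{\dagger}\ket{\psi}\in\ker(\rho)$ for every $\ket{\psi}\in\ker[\mathcal{E}(\rho)]$ and every $x$, which one could equally push through directly. The only point that genuinely needs care — and the natural place for the argument to go wrong if one is careless — is that \emph{both} trace-preservation and unitality of $\mathcal{E}$ are used: unitality of $\mathcal{E}$ makes $\mathcal{E}^{\dagger}$ trace-preserving, which is what turns $\textrm{tr}[\mathcal{E}^{\dagger}(P)]$ into $\textrm{tr}(P)$, whereas trace-preservation of $\mathcal{E}$ makes $\mathcal{E}^{\dagger}$ unital, which is what gives $Q \le \mathbb{I}$. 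Hence the statement is really about unital channels: the channel $\rho \mapsto \textrm{tr}(\rho)\,\sigma$ for a fixed pure state $\sigma$ is trace-preserving but not unital and collapses every rank to one, so no analogous bound can hold for general channels.
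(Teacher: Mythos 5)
Your proof is correct, and it takes a genuinely different route from the paper's. The paper argues via majorisation: it invokes the fact that a unital channel's output spectrum is majorised by its input spectrum, $\lambda(\rho)\succ\lambda(\mathcal{E}(\rho))$, then uses Birkhoff's theorem to write the connecting doubly stochastic matrix as a convex mixture of permutations, and finally observes that the support of a convex combination of permuted probability vectors is at least the support of any single term. Your argument instead dualises: you pass to the adjoint map $\mathcal{E}^{\dagger}$, note that trace-preservation and unitality of $\mathcal{E}$ swap roles to make $\mathcal{E}^{\dagger}$ again a unital channel, and feed the pair $\big(\rho,\ \mathcal{E}^{\dagger}(\mathbb{I}-\Pi_{\mathcal{E}(\rho)})\big)$ into Lemma~\ref{lemma2}, which the paper has already established. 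What your route buys is self-containment and a stronger intermediate statement: you avoid importing the (nontrivial) majorisation theorem for unital channels and Birkhoff's decomposition, you reuse machinery already present in the paper, and along the way you obtain the operator inequality $\Pi_{\rho}\leq\mathbb{I}-\mathcal{E}^{\dagger}(\mathbb{I}-\Pi_{\mathcal{E}(\rho)})$, of which the rank bound is just the trace. What the paper's route buys is spectral information beyond the rank --- majorisation controls the whole eigenvalue distribution, which is the natural language for the "unital channels only add noise" intuition the authors appeal to in the Discussion. Your closing remark correctly pinpoints where both hypotheses are used and why the claim fails for non-unital channels; the side comment that Lemma~\ref{lemma2} needs only $\sigma\geq0$ rather than normalisation is accurate but not needed here, since you apply it to a genuine state.
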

\begin{proof}
    We first define $\suppVector{\bm{a}} = \{\textrm{number of $i$} : a_{i} > 0\}$ where $a_{i}$ are the components of the vector $\bm{a}$. This notation was introduced in Ref.~\cite{Lostaglio_2018}, where it was noted that given two vectors $\bm{a}, \bm{b}$ representing probability distributions and $p \in [0,1]$,
    \begin{equation}
        \abs{ \textrm{supp}(p\bm{a} + (1-p)\bm{b})} \geq \textrm{max}\{ \suppVector{\bm{a}}, \suppVector{\bm{b}} \}. \label{supportInequaility}
    \end{equation}
    Secondly, we recall the definition of majorisation, where a vector $\bm{x} \in \mathbb{R}^{n}$ majorises a vector $\bm{y} \in \mathbb{R}^{n}$, denoted $\bm{x} \succ \bm{y}$, if
    \begin{equation}
        \sum_{i=1}^{k} x^{\downarrow}_{i} \geq \sum_{i=1}^{k} y^{\downarrow}_{i} ~ ~ \forall ~ ~ k~\in~\{1,\ldots, n\},
    \end{equation}
    where $x_{i}^{\downarrow}$ ($y_{i}^{\downarrow}$) are the components of the vector $\bm{x}$ ($\bm{y}$) ordered in decreasing order, such that $x_{i}^{\downarrow} \geq x_{i+1}^{\downarrow}~\forall~i~\in~\{1, \ldots, n\}$.
    
    Returning to the proof, if there exists a unital channel $\mathcal{E}$ such that $\sigma = \mathcal{E}(\rho)$, then
    \begin{equation}
        \lambda(\rho) \succ \lambda(\sigma),
    \end{equation}
    where $\lambda(\rho)$ and $\lambda(\sigma)$ are vectors of the spectrum of $\rho$ and $\sigma$, respectively~\cite{Gour_2015}. This then implies the existence of a doubly stochastic matrix, $\mathbb{D}$, such that~\cite{bapat_raghavan_1997}
    \begin{align}
        \lambda(\sigma) = \mathbb{D} \lambda(\rho)= \sum_{i} p_{i} \mathbb{P}_{i} \lambda(\rho),
    \end{align}
    where $\sum_i p_{i} = 1, p_i\ge0$ give a probability distribution and $\mathbb{P}_{i}$ are the permutation matrices, as a doubly stochastic matrix is a convex combination of permutation matrices \cite{bapat_raghavan_1997}. Employing Eq.~\eqref{supportInequaility}, it can be seen that 
    \begin{align}
        \suppVector{\lambda(\sigma)} = \suppVector{\sum_{i} p_{i} \mathbb{P}_{i} \lambda(\rho)} \geq \textrm{max}_{i} \{ \suppVector{\mathbb{P}_{i}\lambda(\rho)} \} = \suppVector{\lambda(\rho)}, \label{inequaility in terms of supports}
    \end{align}
    as $\suppVector{\bm{a}}$ is invariant under permutation. To complete the proof, it is noted that 
    $
        \textrm{rank}(\sigma) = \suppVector{\lambda(\sigma)}.
    $
    Hence, Eq.~\eqref{inequaility in terms of supports} gives
    $
        \textrm{rank}(\sigma) \geq \textrm{rank}(\rho).
    $

    Physically, this can be explained by the equal convertibility power of unital channels and noisy operations~\cite{Gour_2015}; unital channels can only output states more or equally as noisy as the input states, and hence, they can only make states more indistinguishable.
\end{proof}


\end{document}